\def\BibTeX{{\rm B\kern-.05em{\sc i\kern-.025em b}\kern-.08em
    T\kern-.1667em\lower.7ex\hbox{E}\kern-.125emX}}
\begin{document}

\title{Design of Adaptive Hybrid Downlink NOMA-TDMA for Visible Light Communications Networks}

\author{Tuan A. Hoang, Chuyen T. Nguyen, and Thanh V. Pham, ~\IEEEmembership{Senior Member,~IEEE}

\thanks{Tuan A. Hoang and Chuyen T. Nguyen are with the School of Electrical and Electronic Engineering, Hanoi University of Science and Technology, Hanoi, Vietnam (email: tuan.ha203826@sis.hust.edu.vn, chuyen.nguyenthanh@hust.edu.vn).

Thanh V. Pham with the Department of Mathematical and Systems Engineering, Shizuoka University, Shizuoka, Japan (e-mail: pham.van.thanh@shizuoka.ac.jp). 
}}

\maketitle

\begin{abstract}
This paper proposes an adaptive hybrid non-orthogonal multiple access (NOMA)-time division multiple access (TDMA) scheme for multi-user visible light communication (VLC) networks, aiming to enhance users' sum-rate performance while maintaining low complexity.  In the proposed scheme, users are divided into groups where each group is served in a different time slot using TDMA. Within each group, up to two users can be served simultaneously using NOMA. A central challenge lies in determining which users should be paired together for NOMA, as the effectiveness of successive interference cancellation (SIC) employed by NOMA depends on the difference between users’ channel gains. 
To address this, for a pair of users, we determine the range of their channel gain ratio within which the pair benefits more from NOMA or TDMA. Identifying the lower and upper bounds of this range is formulated as two optimization problems which are solved efficiently using the Successive Convex Approximation (SCA) method. Simulation results demonstrate that the proposed scheme outperforms the conventional hybrid NOMA-TDMA method under different numbers of users and transmit LED powers.
\end{abstract}

\begin{IEEEkeywords}
Visible light communication (VLC), NOMA, TDMA, successive convex approximation (SCA).
\end{IEEEkeywords}


\section{Introduction}
Driven by the widespread use of wireless devices and data-intensive applications such as cloud computing, video streaming, online gaming, and virtual reality (VR), the demand for reliable, high-speed, and low-latency wireless communication has grown exponentially in recent years \cite{Tariq2020}. As the limited frequency resources of radio frequency (RF) systems are becoming insufficient to fulfill this growth, visible light communications (VLC) has emerged as a promising complementary technology to RF for future wireless communications \cite{ariyanti2020visible}. Experiments have shown that VLC could offer significantly higher data rates without causing interference with existing RF systems \cite{tavakkolnia2018energy,miranda2023review}.
Furthermore, VLC can be easily integrated with existing lighting infrastructure, making it a cost-effective and environmentally friendly technology. 

To share the available communication resources among multiple users, VLC systems can adopt multiple access techniques traditionally used in RF systems, such as frequency division multiple access (FDMA), or time division multiple access (TDMA). These orthogonal multiple access (OMA) schemes allocate distinct frequency or time resources to each user to avoid interference.
However, OMA schemes often suffer from low spectral efficiency, especially when the number of users increases.
To overcome this limitation, non-orthogonal multiple access (NOMA) has gained significant attention as a key enabling technology for future communication systems, including VLC \cite{sadat2022survey,ding2017survey}. 
In contrast to OMA, NOMA allows multiple users to access the same frequency and time resources simultaneously. The most prevalent form of NOMA is power domain NOMA, in which users are assigned different power levels based on their channel conditions: users with stronger channel gains are assigned lower power, while those with weaker channel gains are allocated higher power \cite{islam2016power}. 
At the receiver, successive interference cancellation (SIC) is used to decode the signals in order of their power levels. The signal with the higher power, typically from the user with the weak channel gain, is decoded first, treating the strong user's signal as interference. Once successfully decoded, the weak user’s signal is subtracted from the received signal, allowing the receiver to decode the strong user’s signal.
Despite its advantage in improving spectral efficiency, NOMA introduces additional complexity due to the requirement of SIC, which can increase both decoding delay and computational load at the receiver, particularly when the number of users is large. To balance the trade-off between performance and complexity, many practical implementations adopt a hybrid OMA-NOMA approach, typically based on a two-user model \cite{ding2015impact, zhu2018optimal}. In this configuration, two users are paired to share the same resource block using NOMA, and different NOMA pairs are assigned to separate resource blocks, such as different time slots in the case of TDMA. 

In this study, we propose an adaptive hybrid NOMA-TDMA system, where users are divided into groups, each allocated a distinct time slot. Unlike the approaches in \cite{ding2015impact, zhu2018optimal}, which constrain each group to consist of exactly two users, our approach allows each group to contain either one or two users, depending on the ratio of their channel gains. This is motivated by the fact that, in typical indoor VLC environments, the dominant line-of-sight transmission and the relatively small coverage area of VLC cells lead to users experiencing similar channel conditions (i.e., high channel correlation). In such cases, the power level separation required for effective SIC becomes less pronounced, which can degrade the performance of NOMA. As a result, under certain ratios of the channel gains, grouping two users into the same TDMA time slot and serving them via NOMA may offer no advantage-and may even be detrimental-compared to serving them separately in distinct time slots. 

The contributions of the paper are summarized as follows.
\begin{itemize}
    \item For a given pair of two users, we determine the range of their channel gain ratio within which using NOMA yields a better sum-rate performance than using TDMA. Specifically, two optimization problems are formulated and solved using the Successive Convex Approximation (SCA) method to derive the upper and lower bounds of this range.   
    \item Based on the identified range, we propose a grouping algorithm to decide which users should be paired using NOMA and which should be served individually in one TDMA time slot. 
\end{itemize}

\section{System model}

\begin{figure}[t]
    \centering
    \includegraphics[width=0.35\textwidth, height=0.28\textwidth]{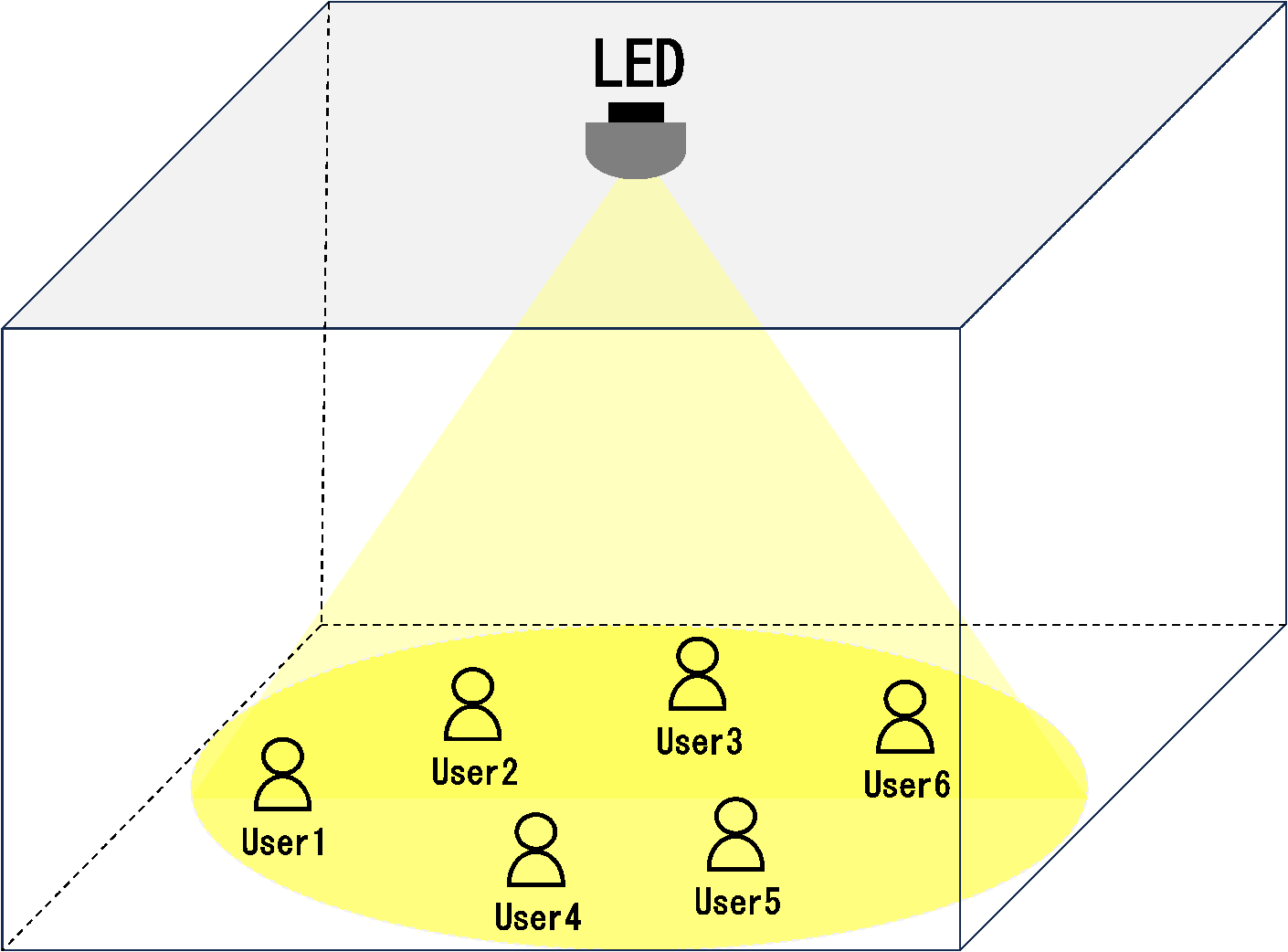}
    \caption{Multi-user VLC system model.}
    \label{fig:system}
\end{figure}
We consider a multi-user downlink VLC system, as depicted in Fig.~\ref{fig:system}, which consists of a single LED-based transmitter. This system serves $K$ users ($K = 6$ as illustrated in Fig.~\ref{fig:system}) using a hybrid NOMA-TDMA scheme. The LED transmitter is centrally positioned on the ceiling, while the users are randomly distributed throughout the room. As exampled in Fig.~\ref{fig:NOMA-TDMA}, our proposed scheme categorizes users into several groups, consisting of one or two users, with each group being allocated a TDMA time slot. Two users are served in one time slot using NOMA (i.e., users 1 and 5 in the first time slot and users 2 and 6 in the second time slot) if their channel gain ratio falls within a specific range over which using NOMA is more beneficial.   
\begin{figure}[ht]
    \centering
    \includegraphics[width=0.48\textwidth, height=0.28\textwidth]{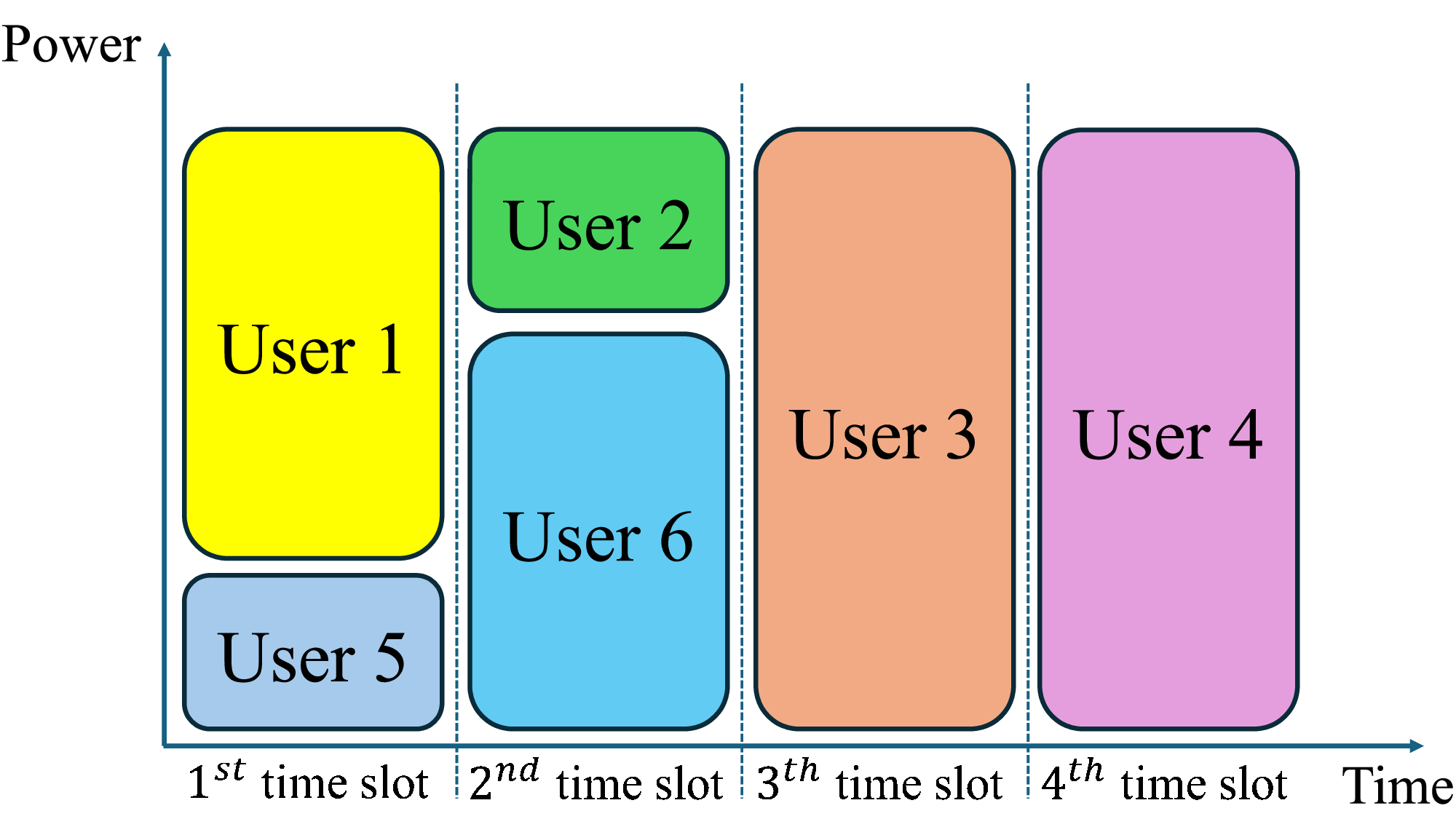}
    \caption{Proposed adaptive hybrid NOMA-TDMA scheme.}
    \label{fig:NOMA-TDMA}
\end{figure}

Assume that in the $k$-th time slot, NOMA is employed to serve a two-user group. The transmitted optical signal can be expressed as
\begin{equation}
x_{k} = \sqrt{ \alpha_{k,1} P_{\text{LED}}} s_{k,1} + \sqrt{ \alpha_{k,2} P_{\text{LED}}} s_{k,2} + P_{\text{D}},
\label{eq:1}
\end{equation}
where $P_{\text{LED}}$ is the transmitted power of the LED, $s_{k,1}$ and $s_{k,2}$ are the transmitted data of the two users, and $P_{\text{D}}$ is the power offset which maintains the non-negativity of the optical signal and controls the brightness of the LED. In \eqref{eq:1}, 
$\alpha_{k,1}$ and $\alpha_{k,2}$ are power allocation coefficients that determine how transmission power is distributed between 2 users. Since determining the optimal $\alpha_{k,1}$ and $\alpha_{k,2}$ is an NP-hard problem, we employ the Fractional Transmit Power Allocation (FTPA) scheme to set values for these coefficients \cite{tao2019strategy}
\begin{align}
\alpha_{k,1} = \frac{h_{k,2}^{2\alpha}} {h_{k,1}^{2\alpha}+h_{k,2}^{2\alpha}},~ 
\alpha_{k,2} = \frac{h_{k,1}^{2\alpha}} {h_{k,1}^{2\alpha}+h_{k,2}^{2\alpha}} ,
\label{eq:2}
\end{align}
where $h_{k,1}$ and $h_{k,2}$ represent the channel gains of the two users in the pair. In this study, $\alpha = 1$ is chosen for simplicity. 

The electrical signal received by the user $i$ $(i \in \{1,2\})$ in the $k$-th time slot is written by 
\begin{equation}
y_{k,i} = \xi h_{k,i} x_{k} + n_{k,i},
\label{eq:3}
\end{equation}
where $\xi$ is the optical-electrical conversion efficiency, $n_{k,i}$ is an additive white Gaussian noise (AWGN) with zero-mean and variance $\sigma^2_{k,i}$, which primarily originates from shot noise and thermal noise, and $h_{k}$ is the channel gain of the VLC link. In this work, we consider the line-of-sight (LoS) link component, which can be represented as \cite{komine2004fundamental}
\begin{align}
h_{k,i}  = & \frac {(m+1) A_{P} R_{P}}{2 \pi d_{k,i}^2} \cos^m(\phi_{k,i}) T_{s}(\varphi_{k,i}) T_{f}(\varphi_{k,i}) \cos (\varphi_{k,i}) \nonumber \\
& \times \mathds{1}_{[0, \Psi]}(\phi_{k, i}),
\label{eq:4}
\end{align}
where $\mathds{1}_{[x, y]}(\cdot)$ denotes the indicator function. In \eqref{eq:4}, $m = -1/\log_{2}(\cos(\phi_{1/2}))$, with $\phi_{1/2}$ being the semi-angle at half illuminance of the LED, is the Lambertian radiation, $\phi_{k,i}$ and $\varphi_{k,i}$ denote the angles of irradiance and incidence between the LED and the user, $\Psi$ denotes the photodiode’s field of view (FOV), $d_{k,i}$ is the distance between the LED and the user, $A_{P}$ and $R_{P}$ represent the detection area and responsivity of the photodiode, respectively. Also, $T_{s}(\varphi_{k,i})$ represents the gain of the optical filter, and $T_{f}(\varphi_{k,i})$ represents the non-imaging concentrator gain, which is given by
\begin{align}
    T_{f}(\varphi_{k,i}) = \frac {\kappa^2}{\sin^2(\varphi_{k,i})}\mathds{1}_{[0, \Psi]}(\varphi_{k, i}),
\end{align}
where $\kappa$ is the refractive index of the concentrator.

\section{Proposed Adaptive Hybrid NOMA-TDMA}
\subsection{Channel condition for using NOMA}
Given a pair of users, the proposed scheme first determines whether they should be grouped into one time slot and served using NOMA. For this purpose, we identify the range of channel gain ratios within which using NOMA yields a higher sum-rate performance than using TDMA. 

Firstly, let us consider the case that two users $u_{k, 1}$ and $u_{k, 2}$ (with their respective channel gains are $h_{k, 1}$ and $h_{k, 2}$) are grouped into the $k$-th time slot and served by NOMA. Assume that $u_{k, 1}$ and $u_{k, 2}$ are weak and strong users, respectively (i.e., $h_{k,1} \leq h_{k,2}$). For the user $u_{k, 1}$, it treats $s_{k, 2}$ as noise and directly decodes $s_{k, 1}$. For the strong user $u_{k, 2}$ employing SIC, it first decodes $s_{k, 1}$, treating $s_{k, 2}$ as noise. After decoding $s_{k, 1}$, it subtracts $s_{k, 1}$ from the received signal and then decodes $s_{k, 2}$. Denote $\tau_k$ as the duration of the $k$-th time slot. The achievable sum-rate of the two users is given by \cite{wang2013tight}
\begin{align}
R_{k, \text{NOMA}} & = \tau_{k} \log_2\left( 1 + \frac {e} {2\pi} \frac {\alpha_{k,1} P_{\text{LED}} h^2_{k,1}} {\alpha_{k,2} P_{\text{LED}} h^2_{k,1}+\sigma_{k,1}^2}\right) \nonumber \\
& + \tau_{k} \log_2\left( 1 + \frac {e} {2\pi} \frac {\alpha_{k,2} P_{\text{LED}} h^2_{k,2}} {\sigma_{k,2}^2}\right), 
\label{eq:6}
\end{align}
where $\sigma_{k,1}^2$ and $\sigma_{k,2}^2$ are the power of the noise in the signals received by $u_{k, 1}$ and $u_{k, 2}$, respectively. Now, in the case that TDMA is used, the achievable sum-rate is 
\begin{align}
R_{k, \text{TDMA}} & = {\tau_{k,1}} \log_2\left( 1 + \frac {e} {2\pi} \frac {P_{\text{LED}} h^2_{k,1}} {\sigma_{k,1}^2}\right) \nonumber \\
& + {\tau_{k,2}} \log_2\left( 1 + \frac {e} {2\pi} \frac {P_{\text{LED}} h^2_{k,2}} {\sigma_{k,2}^2}\right), 
\label{eq:7}
\end{align}
where $\tau_{k, 1}$ and $\tau_{k, 2}$ denote the allocated time slots to $u_{k, 1}$ and $u_{k, 2}$, satisfying $\tau_{k, 1} + \tau_{k, 2} = \tau_k$. For simplicity, let $\tau_{k, 1} = \tau_{k, 2} = \frac{\tau_k}{2}$ and suppose that $\sigma_{k,1}^2 = \sigma_{k,2}^2 = \sigma_k^2$.

Denote $\gamma_k = \frac{P_{\text{LED}} h^2_{k,1}}{\sigma_{k}^2}$ as the signal-to-noise ratio (SNR) of the weak user, and define $r = \frac{h_{k,2}^2}{h_{k,1}^2}$ as the squared ratio of the channel gains between the strong user and the weak user. Note that both $\gamma_k$ and $r$ are positive. Accordingly, the NOMA power allocation coefficients can be determined by the ratio $r$, such that $\alpha_{k,1} = \frac{r}{r+1}$ and $\alpha_{k,2} = \frac{1}{r+1}$. As this ratio directly affects the performance of the NOMA, we aim to determine its values at which using NOMA is more beneficial than using TDMA, i.e., $R_{k, \text{NOMA}} \geq R_{k, \text{TDMA}}$. In other words, identifying the values of $r$ that satisfy the following inequality
\begin{align}
& \log_2\left( 1 + \frac {e} {2\pi}  \frac{r \gamma_k}{r+\gamma_k+1} \right) +\log_2\left( 1 + \frac {e} {2\pi} \frac {r \gamma_k} {r+1}\right) \geq \nonumber \\
& \frac{1}{2}\log_2 \left( 1 + \frac {e} {2\pi} \gamma_k\right) + \frac{1}{2}\log_2 \left( 1 + \frac {e} {2\pi} r\gamma_k\right).
\label{eq:8}
\end{align}
In the following proposition, we show that there exists $r_{\text{min}}$ and $r_{\text{max}}$ ($r_{\text{min}}$ and $r_{\text{max}} > 0$) that the above inequality satisfies for $\forall r \in [r_{\text{min}}, r_{\text{max}}] $.

\newtheorem{proposition}{Proposition}
\begin{proposition}
In \eqref{eq:8}, let $p(r) = \log_2\left( 1 + \frac {e} {2\pi}  \frac{r \gamma_k}{r+\gamma_k+1} \right) +\log_2\left( 1 + \frac {e} {2\pi} \frac {r \gamma_k} {r+1}\right)$ and $q(r) = \frac{1}{2}\log_2 \left( 1 + \frac {e} {2\pi} \gamma_k\right) + \frac{1}{2}\log_2 \left( 1 + \frac {e} {2\pi} r\gamma_k\right)$. Then the equation $f(r) = p(r)-q(r)=0$ has at most 2 positive solutions $r_{k, \rm{min}}$, $r_{k, \rm{max}}$  and $f(r) \geq 0~ \forall r \in [r_{\rm{min}},r_{\rm{max}}]$.
\end{proposition}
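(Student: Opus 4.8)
The plan is to turn the transcendental inequality \eqref{eq:8} into a polynomial one, bound the number of its positive roots with Descartes' rule of signs, and then locate the interval on which $f\ge 0$ from the behaviour of $f$ as $r\to 0^{+}$ and $r\to\infty$. Write $c=\tfrac{e}{2\pi}$ and $\beta=c\gamma_k$. Since $t\mapsto 2^{t}$ is strictly increasing, $f(r)\ge 0$ is equivalent to $2^{2p(r)}\ge 2^{2q(r)}$, and $f(r)=0$ to $2^{2p(r)}=2^{2q(r)}$. Using $1+c\tfrac{r\gamma_k}{r+\gamma_k+1}=\tfrac{(1+\beta)r+\gamma_k+1}{r+\gamma_k+1}$ and $1+c\tfrac{r\gamma_k}{r+1}=\tfrac{(1+\beta)r+1}{r+1}$ one gets $2^{2p(r)}=\tfrac{[(1+\beta)r+\gamma_k+1]^{2}[(1+\beta)r+1]^{2}}{(r+\gamma_k+1)^{2}(r+1)^{2}}$ and $2^{2q(r)}=(1+\beta)(1+\beta r)$, so for $r>0$ the sign of $f(r)$ and its set of zeros coincide with those of the degree-$5$ polynomial
\[
N(r):=\big[(1+\beta)r+\gamma_k+1\big]^{2}\big[(1+\beta)r+1\big]^{2}-(1+\beta)(1+\beta r)(r+\gamma_k+1)^{2}(r+1)^{2},
\]
whose leading coefficient $-(1+\beta)\beta$ and constant term $-\beta(\gamma_k+1)^{2}$ are both negative.

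The core step is to count the positive roots of $N$. Dividing $N$ by $-\beta$, one computes that the coefficients of $r^{5}$, $r^{1}$, $r^{0}$ are $1+\beta$, $(1+\beta)(\gamma_k+1)^{2}$, $(\gamma_k+1)^{2}$ -- all positive -- while the coefficients of $r^{4}$, $r^{3}$, $r^{2}$ are, respectively, $-(1+\beta)\big(c^{2}\gamma_k^{2}+(3c-2)\gamma_k-1\big)$, $(1+\beta)\big((1-2c)\gamma_k^{2}+(2-4c)\gamma_k-2\big)$, and $(1+\beta)(\gamma_k^{2}-2)$. Since $c=\tfrac{e}{2\pi}<\tfrac12$, each of the quadratics $c^{2}\gamma_k^{2}+(3c-2)\gamma_k-1$, $(1-2c)\gamma_k^{2}+(2-4c)\gamma_k-2$, $\gamma_k^{2}-2$ is an upward parabola with negative value at $\gamma_k=0$, hence has exactly one positive root; using $2-4c=2(1-2c)$ one checks these roots are ordered $\sqrt2<\rho_{2}<\rho_{1}$, so that as $\gamma_k$ runs over $(0,\infty)$ only four sign patterns arise for the coefficient sequence of $N$, namely $(+,+,-,-,+,+)$, $(+,+,-,+,+,+)$, $(+,+,+,+,+,+)$, $(+,-,+,+,+,+)$. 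Each has at most two sign changes, so Descartes' rule of signs gives that $N$ has at most two positive roots counted with multiplicity; hence $f(r)=0$ has at most two positive solutions $r_{k,\mathrm{min}}\le r_{k,\mathrm{max}}$.

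Finally, as $r\to\infty$ both logarithms in $p(r)$ converge to $\log_2(1+\beta)$, so $p$ is bounded, whereas $q(r)\to\infty$; thus $f(r)\to-\infty$. Also $p(0)=0$ and $q(0)=\tfrac12\log_2(1+\beta)>0$, so $f(r)<0$ for $r$ near $0$. Being continuous on $(0,\infty)$, negative near both ends, and having at most two zeros there, $f$ cannot have exactly one zero: either it has none, and then $f<0$ on all of $(0,\infty)$ (the asserted interval is empty), or it has two zeros $r_{k,\mathrm{min}}\le r_{k,\mathrm{max}}$, and then $f\ge 0$ precisely on $[r_{k,\mathrm{min}},r_{k,\mathrm{max}}]$, which is the claim. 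The one delicate point is the middle step: the expansion of $N$ and the ordering of the three thresholds must be verified so that the bound of two sign changes holds uniformly in $\gamma_k$. (One could instead argue analytically that $f'(0)>0$, that $f'(r)<0$ for large $r$ because $p'(r)=O(r^{-2})$ while $q'(r)$ decays only like $r^{-1}$, and that $f'$ vanishes exactly once on $(0,\infty)$ -- whence $f$ is first increasing, then decreasing -- but establishing that single sign change requires essentially the same computation.)
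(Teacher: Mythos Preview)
Your argument is correct and takes a genuinely different route from the paper's. The paper differentiates: it writes $f'(r)=u(r)/v(r)$ with $u$ a quartic in $r$, observes that for every $\gamma_k>0$ the coefficients of $r^{4},r^{3},r^{2}$ in $u$ are negative (so $u''<0$ on $[0,\infty)$) while $u(0)>0$ and $u(r)\to-\infty$, hence $u$ has exactly one positive root; consequently $f$ has a single critical point, a local maximum, and can cross zero at most twice. You instead stay with $f$ itself, clear denominators in $2^{2p(r)}=2^{2q(r)}$ to obtain a degree-$5$ polynomial $N(r)$, and bound its positive roots via Descartes' rule. The trade-off is that the paper's concavity step needs no case distinction in $\gamma_k$ (all three middle coefficients of $u$ happen to be negative for every $\gamma_k>0$), whereas your sign analysis must track three thresholds $\sqrt{2}<\rho_2<\rho_1$ and check four patterns; on the other hand, your method is purely algebraic and delivers the root count directly rather than through the shape of $f$. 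I verified your coefficients for $r^{4},r^{3},r^{2},r^{1},r^{0}$ and the ordering $\sqrt{2}<\rho_2<\rho_1$ numerically for $c=e/(2\pi)$, so the ``delicate point'' you flag does go through. Your parenthetical alternative --- showing $f'$ changes sign exactly once --- is in fact precisely the paper's line, and the concavity of $u$ makes that step short.
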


\begin{proof}
The proof is given in the Appendix.
\end{proof}
\textbf{Proposition 1} shows that if $r \in [r_{\text{min}}, r_{\text{max}}]$, the system groups $u_{k, 1}$ and $u_{k, 2}$ into the $k$-th time slot and utilizes NOMA. Otherwise, the two users are allocated distinct TDMA time slots. Note that since $r_{\text{min}}$ and $r_{\text{max}}$ are the minimum and maximum values that satisfy the inequality in \eqref{eq:8}, they can be computed by solving the following optimization problems 

\begin{subequations}
\label{OptProbMin}
    \begin{alignat}{2}
        &\underset {} {\text{minimize}} & & r \label{eq:9}\\
        &\text{subject to }  & &  \nonumber \\
        & & & r \ge 0 , \label{eq:10} \\
        & & & p(r) - q(r) \ge 0 ,\label{eq:11}
    \end{alignat}
\end{subequations}
and
\begin{subequations}
\label{OptProbMax}
    \begin{alignat}{2}
        &\underset {} {\text{maximize}} & & r \label{eq:12}\\
        &\text{subject to }  & &  \nonumber \\
        & & & r \ge 0 \label{eq:13} ,\\
        & & & p(r) - q(r) \ge 0 \label{eq:14} .
    \end{alignat}
\end{subequations}
It can be easily shown that both $p(r)$ and $g(r)$ are concave functions of $r$. The constraints in \eqref{eq:11} and \eqref{eq:14} are, therefore, the difference of concave functions and generally not convex.
To address the non-convexity of these constraints, we employ the Successive Convex Approximation (SCA) technique \cite{razaviyayn2014successive}, which is an efficient iterative method for solving non-convex optimization problems.
Specifically, at the $i$-th iteration,  the function $q(r)$ is approximately linearized around the solution at the previous iteration by its first-order approximation. Replacing $q(r)$ by its linear approximation results in surrogate convex problems that can be efficiently solved using standard optimization solvers (i.e., SDPT3 and SeDuMi). The obtained solution is then used to refine the approximation in the next iteration. This iterative process continues until a convergence is achieved, typically determined by the change between successive solutions falling below a predefined threshold.
The detailed procedure is described in \textbf{Algorithm}~\ref{alg1}. 
\begin{algorithm}[ht]
\caption{SCA method to solve \eqref{OptProbMin} and \eqref{OptProbMax}}
\label{alg1}
\begin{algorithmic}[1]
\STATE \textbf{Initialize:} Choose feasible initial points $r^{(0)}_{\text{min}} > 0$ and $r^{(0)}_{\text{max}} > 0$, and a tolerance $\epsilon > 0$.
\REPEAT
    \STATE Compute the first-order approximations of $q(r)$
    \[
    \tilde{q}_{\text{min}}(r) = q\left(r^{(i-1)}_{\text{min}}\right) + q'\left(r^{(i-1)}_{\text{min}}\right)\left(r_{\text{min}} - r^{(i-1)}_{\text{min}}\right)
    \]
    \[
    \tilde{q}_{\text{max}}(r) = q\left(r^{(i-1)}_{\text{max}}\right) + q'\left(r^{(i-1)}_{\text{max}}\right)\left(r_{\text{max}} - r^{(i-1)}_{\text{max}}\right)
    \]
    where $r^{(i-1)}_{\text{min}}$ and $r^{(i-1)}_{\text{max}}$ are the solutions at the $(i-1)$-th iteration. 
    \STATE Replace the constraints \eqref{eq:11} and \eqref{eq:14} with $p(r) - \tilde{q}_{\text{min}}(r) \geq 0$ and $p(r) - \tilde{q}_{\text{max}}(r) \geq 0$, respectively, to form surrogate convex problems.
    \STATE Solve the surrogate problems to obtain solutions $r^{(i)}_{\text{min}}$ and $r^{(i)}_{\text{max}}$. 
\UNTIL $\left|r^{(i)}_{\text{min}} - r^{(i-1)}_{\text{min}}\right| < \epsilon$ and $\left|r^{(i)}_{\text{max}} - r^{(i-1)}_{\text{max}}\right| < \epsilon$
\RETURN Optimal solutions $r_{\text{min}} \leftarrow r^{(i)}_{\text{min}}$ to \eqref{OptProbMin} and $r_{\text{max}} \leftarrow r^{(i)}_{\text{max}}$ to \eqref{OptProbMax}.
\end{algorithmic}
\end{algorithm}
\subsection{Adaptive User Pairing Algorithm}
In this section, we propose a heuristic algorithm to select user pairs suitable for NOMA transmission. 
Specifically, the proposed algorithm first employs the same user pairing strategy proposed in~\cite{zhu2018optimal}, which pairs the weakest user with the strongest one, the second weakest user with the second strongest one, and so on, to maximize the contrast in channel gains between paired users. Additionally, it introduces an adaptive scheme that only pairs whose channel gain ratios fall within the range \( [r_{\text{min}}, r_{\text{max}}] \) determined in the previous section are considered valid. The detailed procedure is presented in \textbf{Algorithm}~\ref{alg2}.
\begin{algorithm}[ht]
\caption{User Pairing for NOMA}
\label{alg2}
\begin{algorithmic}[1]
\STATE \textbf{Input:} Channel gains of \( K \) users assuming that \( h_1 \leq h_2 \leq \dots \leq h_K \)
\STATE Initialize an empty pairing set: \( \mathcal{S} \leftarrow \emptyset \)
\STATE Mark all users as unpaired
\FOR{weak user index \( i = 1 \) to \( K - 1 \)}
    \FOR{strong user index \( j = K \) down to \( i + 1 \)}
        \IF{users \( i \) and \( j \) are unpaired \AND \( \frac{h_j^2}{h_i^2} \in [r_{\text{min}}, r_{\text{max}}] \)}
            \STATE Add the pair \( (i, j) \) to \( \mathcal{S} \)
            \STATE Mark users \( i \) and \( j \) as paired
            \STATE \textbf{Break} inner loop and move to the next weak user
        \ENDIF
    \ENDFOR
\ENDFOR
\RETURN the set of NOMA user pairs \( \mathcal{S} \)
\end{algorithmic}
\end{algorithm}
\textbf{Algorithm}~\ref{alg2} performs user pairing by examining combinations between the weakest and strongest unpaired users. Starting from the user with the weakest channel gain, it attempts to find a valid partner by checking users with stronger channel gains in reverse order, beginning with the strongest user and moving inward. A user pair is formed if their squared channel gain ratio \( \frac{h_j^2}{h_i^2} \) lies within the range \( [r_{\text{min}}, r_{\text{max}}] \) determined in \textbf{Algorithm}~\ref{alg1}. Once a valid pair is identified, both users are marked as paired and excluded from further consideration. The process then moves to the next unpaired weak user and repeats the search. Since the users are sorted in ascending order of channel gain, this strategy prioritizes pairing users with the most contrasting channel conditions, which is beneficial for NOMA performance. If no suitable strong user is found for a given weak user, the algorithm skips to the next weak user. Users who cannot be successfully paired are assumed to be served individually using TDMA.

\section{Simulation Results and Discussions}
This section presents numerical results to illustrate the performance of the proposed hybrid NOMA-TDMA scheme in comparison with TDMA and the hybrid method proposed in \cite{zhu2018optimal}. Without otherwise noted, the simulation parameters are given in Table \ref{tab: System parameters}. A 3D Cartesian coordinate system whose origin is the center of the floor is used to specify the positions of the LED and users.  
\begin{table}[ht]
    \centering
    \caption{Simulation parameters.}
    \begin{tabular}{l l}
\hline\hline
     \textbf{Parameter} & \textbf{Value}\\\hline\hline
     Room Dimension     & 6 m $\times$ 6 m $\times$ 3 m  \\\hline
      LED power, $P_{\text{LED}}$ & 1 W \\ \hline
      LED semi-angle at half illuminance, $\phi_{1/2}$    & $60^\circ$ \\\hline
      Energy conversion efficiency, $\eta$     & 0.44  \\\hline
      PD active area, $A_R$     & 1 $\text{cm}^2$  \\\hline
      PD responsivity, $R_P$     & 0.54 A/W  \\\hline
      PD field of view (FOV), $\Psi$     & $60^\circ $  \\\hline
     Optical filter gain, $T_s(\varphi_k)$      & 1  \\\hline
     Refractive index of concentrator, $\kappa$ &1.5  \\\hline
     Noise power, $\sigma_{k, 1}^2$, $\sigma_{k, 2}^2$      & $10^{-14}$ W \\\hline
    \end{tabular}
    \label{tab: System parameters}
\end{table} 
\begin{figure}[ht]
    \centering
    \includegraphics[width=0.4\textwidth, height = 0.3\textwidth]{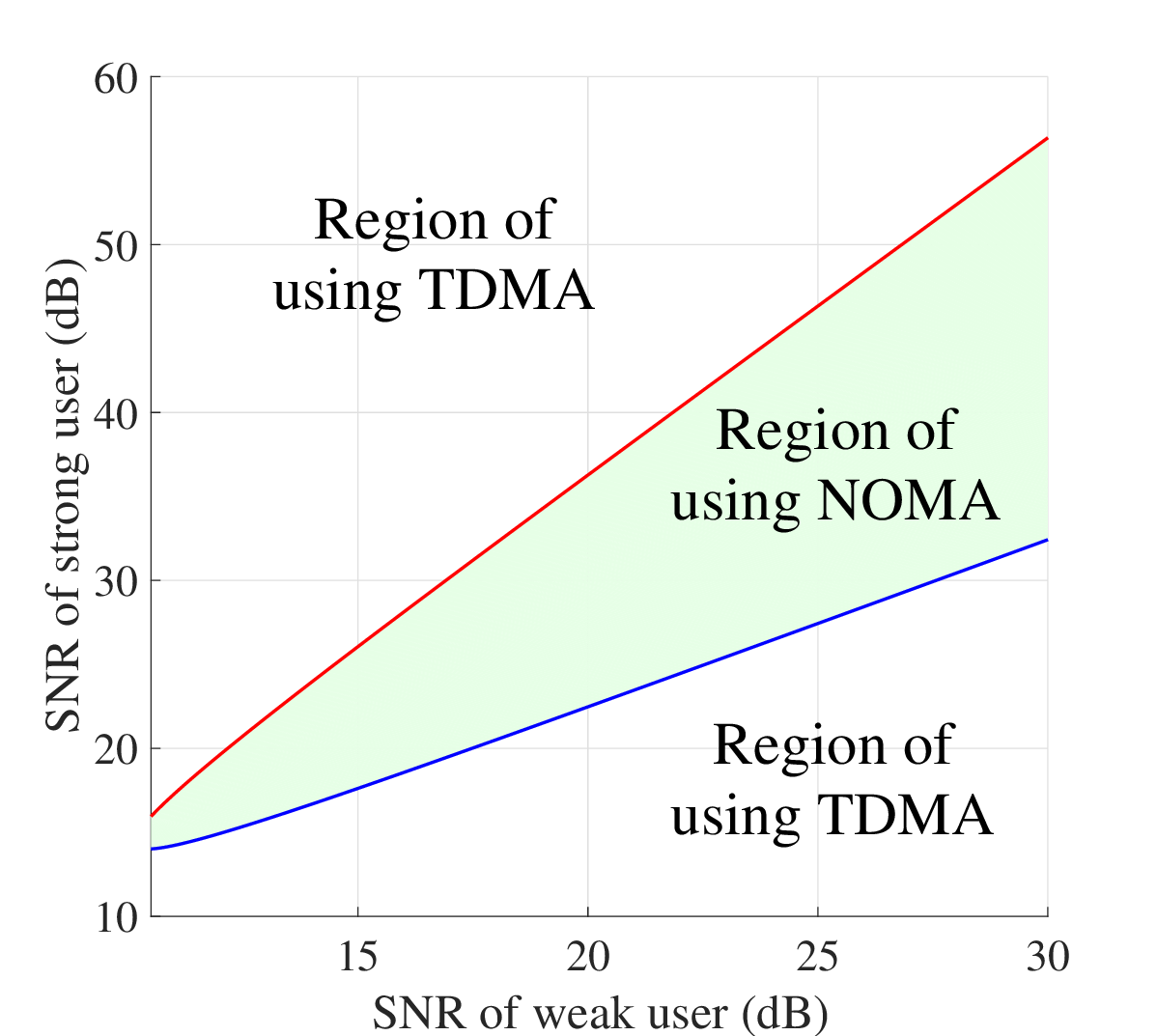}
    \caption{Region of using NOMA.}
    \label{fig:regions}
\end{figure}

From the resulting squared channel gain ratio bounds \( r_{\text{min}} \) and \( r_{\text{max}} \) determined in \textbf{Algorithm}~\ref{alg1}, we illustrate in Fig.~\ref{fig:regions} the SNR regions of the strong and weak users, which decide whether NOMA or TDMA should be employed. The shaded region represents the region where NOMA is expected to provide superior performance compared to TDMA. It is observed that, as the SNR of the weak user increases, the region of using NOMA gradually expands. This behavior indicates that under improved channel conditions for the weak user, the negative impact of similar channel gains on NOMA performance diminishes.

\begin{figure}[ht]
    \centering
    \includegraphics[width=0.4\textwidth, height = 0.3\textwidth]{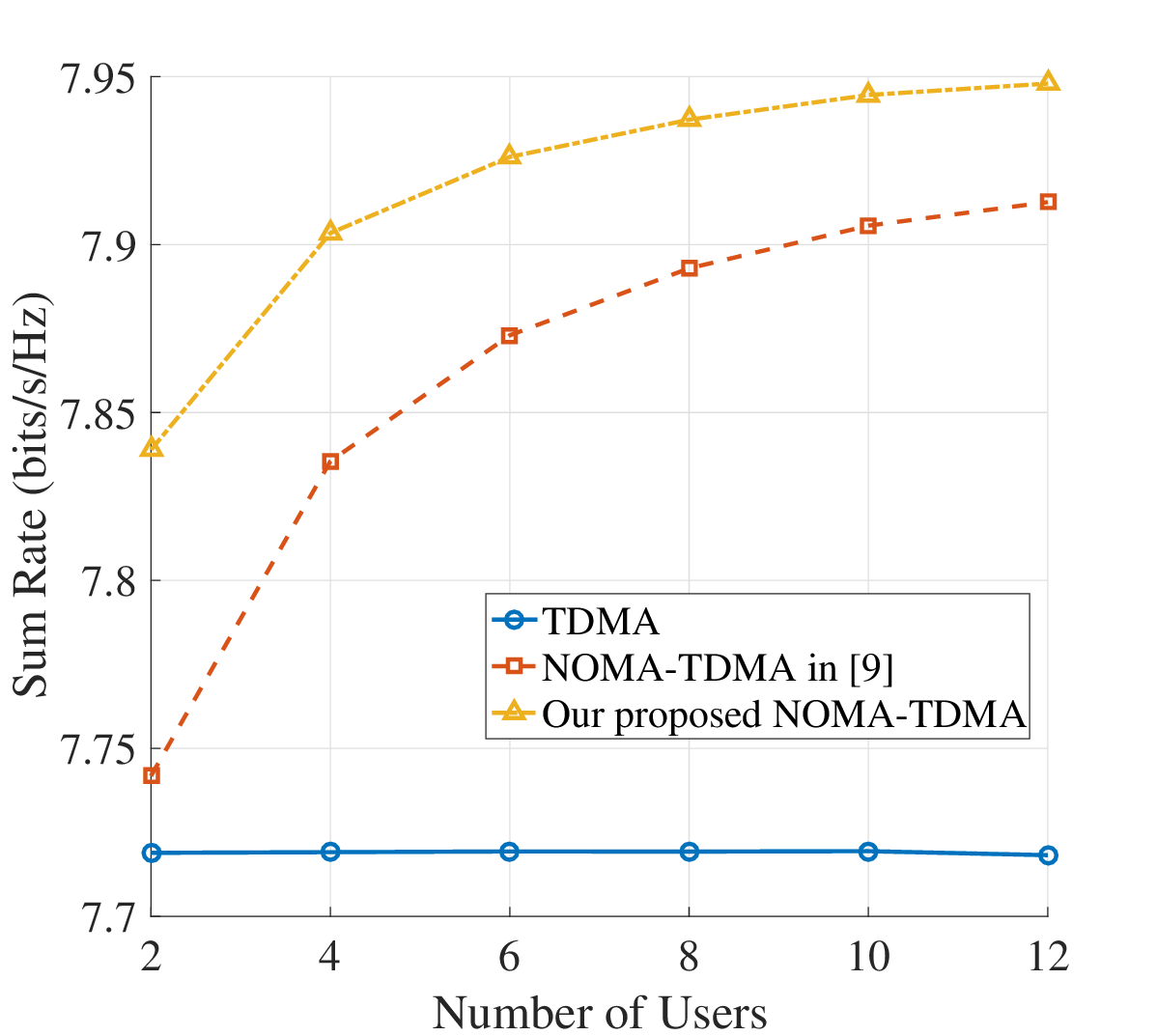}
    \caption{Sum rate for different multiple access schemes.}
    \label{fig:4}
\end{figure} 
 
In Fig.~\ref{fig:4}, we compare the average sum-rate performance of three different schemes: conventional TDMA, NOMA-TDMA pairing strategy in \cite{zhu2018optimal}, and the proposed adaptive NOMA-TDMA.  As shown in the figure, our proposed NOMA-TDMA scheme that employs an adaptive pairing achieves the highest average sum-rate regardless of the number of users.
This demonstrates the effectiveness of incorporating the adaptive pair scheme, which ensures that only users with sufficiently distinct channel conditions are grouped for NOMA transmission. 

Furthermore, Fig.~\ref{fig:P_LED} demonstrates the robustness of our proposed pairing strategy, particularly in scenarios where users have similar channel gains. For example, in a scenario involving six users located at (2.5, 5.5, 0), (4, 0, 0), (5, 1, 0), (5, 5.5, 0), (5, 6, 0), and (6, 1, 0), the sum-rate performance of our scheme consistently surpasses that of the method in \cite{zhu2018optimal} by approximately 0.25~bits/s/Hz, regardless of the LED power.
\begin{figure}[ht]
    \centering
    \includegraphics[width=0.4\textwidth, height = 0.3\textwidth]{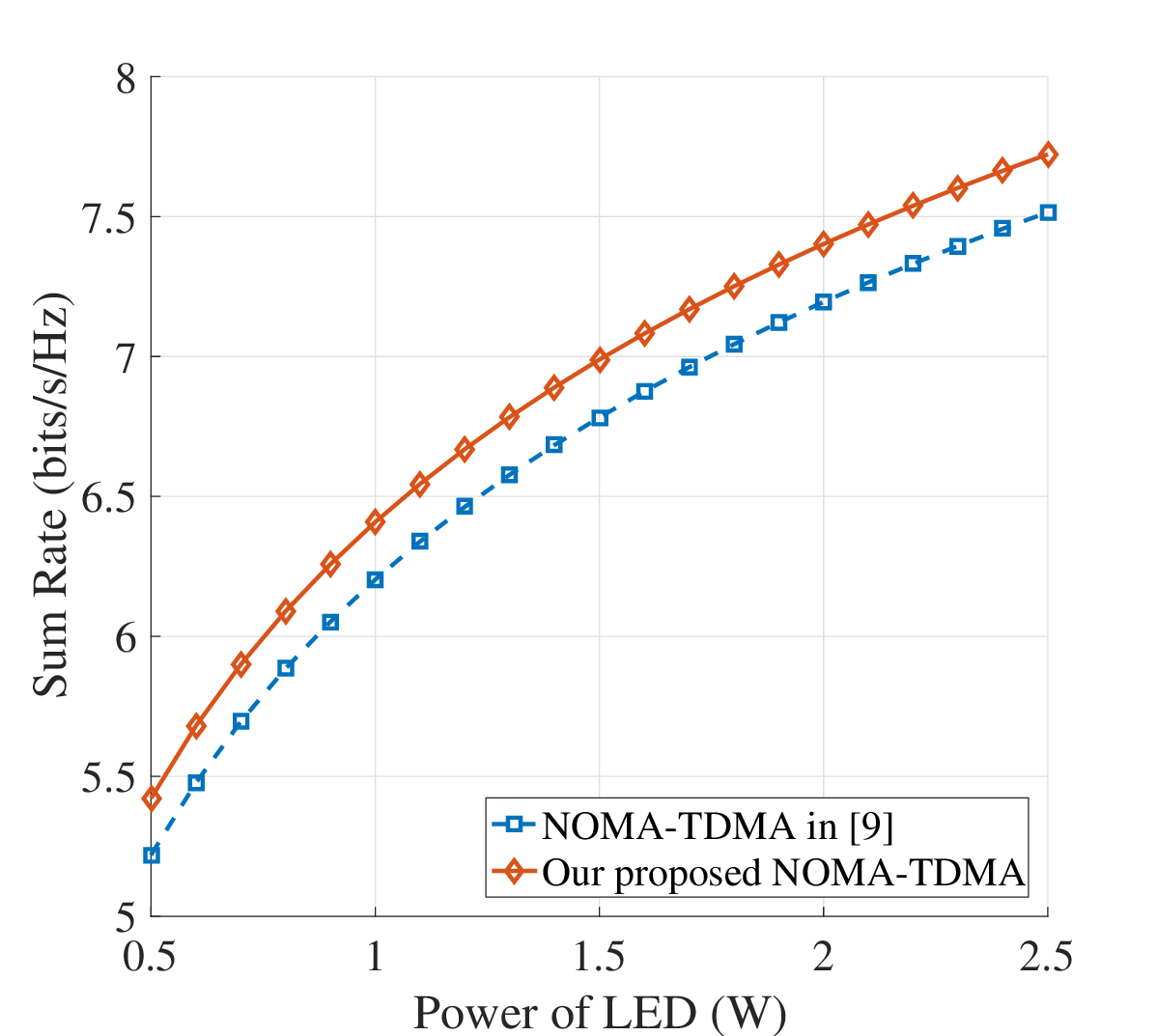}
    \caption{Sum rate for different multiple access schemes.}
    \label{fig:P_LED}
\end{figure} 

\section{Conclusion}
In this paper, we propose an adaptive hybrid NOMA-TDMA transmission scheme for multi-user VLC networks. The proposed scheme pairs two users into one TDMA time slot if their squared channel gain ratio is within a certain range, where NOMA performs better than TDMA. 
To determine the applicable region for NOMA, two optimization problems are formulated and solved using the SCA method. Numerical results showed that the proposed hybrid scheme with an adaptive pairing outperforms the traditional hybrid NOMA-TDMA method. For future work, we will consider the impact of imperfect SIC  to evaluate the robustness of the proposed scheme under more practical conditions.

\appendix
To prove that $ f(r) = p(r) - q(r) = 0 $ has at most 2 solutions $r_{\text{min}}$, $r_{\text{max}} \in [0,\infty)$, and $f(r) \geq 0~\forall r \in [r_{\text{min}}, r_{\text{max}}]$, we will show that the equation \( f'(r) = 0 \) with $f'(r)$ being the derivative of $f(r)$ has at most one solution. We have
\begin{align}
f'(r) &= \frac{1}{\ln 2} \left( 
\frac{t\gamma_k}{(1 + r + t\gamma_kr)(1 + r)} \right. \nonumber \\
&\quad \left. + \frac{t\gamma_k(1 + t\gamma_k)}{(1 + r + \gamma_k + t\gamma_k r)(1 + r + \gamma_k)} 
- \frac{0.5t\gamma_k}{1 + t\gamma_kr} \right),
\label{eq:5}
\end{align}
where $t = \frac{e}{2\pi}$. 

The expression in \eqref{eq:5} can be written in a fractional form $f'(r) = \frac{u(r)}{v(r)}$ where the numerator is a quartic polynomial of the form $u(r) =  f_1(\gamma_k)r^4 + f_2(\gamma_k)r^3 + f_3(\gamma_k)r^2 + f_4(\gamma_k)r + f_5(\gamma_k)$, with
\begin{align}
    f_1(\gamma_k) = -0.5(t\gamma_k+1)^2
\end{align}
\begin{align}
    f_2(\gamma_k) = (t\gamma_k+1)((t^2-0.5t)\gamma_k^2+(t-1)\gamma_k-2)
\end{align}
\begin{align}
    f_3(\gamma_k) =& ~ (t^3+0.5t^2-0.5t)\gamma_k^3 +(4.5t^2-t-0.5)\gamma_k^2 \nonumber \\ & + (4t-3)\gamma_k -1
\end{align}
\begin{align}
    f_4(\gamma_k) = 0.5t\gamma_k^3 + (2t^2 + 1.5t - 1)\gamma_k^2 + (5t - 1)\gamma_k + 2
\end{align}
\begin{align}
    f_5(\gamma_k) = 0.5(\gamma_k+1)^2+t\gamma_k
\end{align}
Since \( \gamma_k > 0 \), it follows that  \( f_1(\gamma_k) < 0 \), \( f_2(\gamma_k) < 0 \), \( f_3(\gamma_k) < 0 \), \( f_4(\gamma_k) > 0 \), and \( f_5(\gamma_k) > 0 \). 
Combining with the observations that \( u(0) > 0 \), \( \underset{r \to \infty}{\lim} u(r) < 0 \), and \( u''(r) < 0 \) \(\forall r \in [0, \infty) \), it can be deduced that \( u(r) \) has exactly one root in this domain. Therefore, \( f(r) \) can have one critical point (local maximum), which implies that it can cross zero at most twice at \(r_{min}\) and \(r_{max}\). Since \( f(r) \) has exactly one local maximum, it follows that \( f(r) \geq 0 \) throughout the interval \( [r_{\min}, r_{\max}] \).
\bibliographystyle{ieeetr}
\bibliography{references}

\end{document}